\documentclass[aps, pra, twocolumn, superscriptaddress]{revtex4-2}
\usepackage[utf8]{inputenc}
\usepackage{amsmath, amsthm, amssymb}
\usepackage{amsthm}
\usepackage{amsfonts}
\usepackage{tikz}
\usepackage{braket}
\usepackage{graphicx}
\usepackage[english]{babel}
\usepackage{comment}
\usepackage[colorlinks=true,linkcolor=blue,urlcolor=blue,citecolor=blue]{hyperref}
\newtheorem{prop}{Proposition}
\newtheorem{theorem}{Theorem}

\begin{document}
\title{Information locking and its resource efficient extraction}

\author{Suchetana Goswami}
\email{suchetana.goswami@gmail.com}
\affiliation{Centre for Quantum Optical Technologies, Centre of New Technologies, University of Warsaw, Banacha 2c, 02-097 Warsaw, Poland}
\affiliation{Harish-Chandra Research Institute, A CI of Homi Bhabha National Institute, Chhatnag Road, Jhunsi, Allahabad 211 019, India}

\author{Saronath Halder}
\email{saronath.halder@gmail.com}
\affiliation{Centre for Quantum Optical Technologies, Centre of New Technologies, University of Warsaw, Banacha 2c, 02-097 Warsaw, Poland}

\begin{abstract}

Locally indistinguishable states are useful to distribute information among spatially separated parties such that the information is locked. This implies that the parties are not able to extract the information completely via local operations and classical communication (LOCC) while it might be possible via LOCC when the parties share entanglement. In this work, we consider an information distribution protocol using orthogonal states for $m\geq3$ spatially separated parties such that even if any $k\leq (m-1)$ parties collaborate still the information cannot be revealed completely. Such a protocol is useful to understand up to what extent the encoded information remains locked. However, if required, the parties can share entanglement and extract the information completely by LOCC. To make the process resource efficient, it should consume less number of entangled states. We show that though the set of states, which are locally indistinguishable across every bipartition, are sufficient for the above protocol, they may consume higher number of entangled states when aiming for complete information extraction. We establish this by constructing a class of locally indistinguishable sets of orthogonal states which can be employed to accomplish the above protocol and these sets consume less number of entangled states, compared to the former sets, for complete information extraction. In fact, this difference in the number of required entangled states for complete information extraction grows linearly with the number of parties. This study sheds light on suitable use of local indistinguishability property of quantum states as resource and thus, we demonstrate an efficient way of information distribution.

\end{abstract}
\maketitle

\section{Introduction}
Distinguishing quantum states \cite{Barnett09, Chefles00, Bergou10, Bae15} is one of the key steps in many information processing protocols. Such a step can be thought of in the following way. Suppose, a quantum system is prepared in an unknown state. But the state is taken from a known set. The goal is to identify the state of the quantum system. If the states of the known set are pairwise orthogonal to each other then, in principle, it is possible to identify the state of the system perfectly by performing an appropriate measurement on the whole system. On the other hand, nonorthogonal states cannot be distinguished perfectly \cite{Nielsen00}.

We assume that a composite quantum system is distributed among several spatially separated parties and the parties are restricted to perform local quantum operations and classical communication (LOCC) only. In such a situation, it may not always be possible to identify the state of the system perfectly even though, the states of the known set are orthogonal to each other \cite{Bennett99-1, Ghosh01, Walgate02, Ghosh02, Horodecki03, Fan04, Ghosh04, Watrous05, Hayashi06, Bandyopadhyay11, Yu12, Halder18, Halder19}. For a given set, if it is not possible to identify the state of the system perfectly then the set is said to be a locally indistinguishable set, otherwise, the set is distinguishable. Locally indistinguishable sets find applications in data hiding \cite{Terhal01, Eggeling02, Lami21, Bandyopadhyay21}, secret sharing \cite{Markham08, Rahaman15}, etc.

In this work, we consider an information distribution task and ask which type of locally indistinguishable sets are appropriate to complete the task. In this context, we keep in our mind that local indistinguishability of quantum states is a resource and one should use it suitably. Anyway, the task can be described in the following manner. Suppose, there is a Referee who wishes to distribute an $N$-level classical information among $m$ spatially separated parties, $N>2$ and $m\geq3$. But this should be done in such a way that even if, certain number of parties $k$, $2\leq k\leq(m-1)$, collaborate, the information is not revealed completely. These collaborating parties are allowed to perform joint measurements on their subsystems and the rest of the parties stand alone, i.e., they are only allowed to perform measurements on their own subsystems. But to make strategies, any sequence of classical communication is allowed among the parties. However, if required, then there must be a way such that the parties can extract the information completely by sharing entangled states as resource among them along with LOCC. Now, sharing entanglement among spatially separated parties is always a difficult job to implement. Therefore, the referee should try to accomplish the task in a way that consumption of entangled states can be reduced for complete information extraction when it is required. Here comes the role of suitably using local indistinguishability of quantum states as resource. 

We note that one way, to implement the collaboration among some parties, is to allow them sharing classical communication (CC) while the non-collaborating parties do not use CC. To beat any such collaboration, here we allow the collaborating parties to use joint measurements. Thus, we basically are in search of robust information distribution protocol. Clearly, such a protocol is also useful to understand up to what extent the privacy of the encoded information remains intact.

Implementing the above task might be easier if we drop the condition that one has to reduce the consumption of entangled states for complete information extraction (we say this condition as `resource-efficient' condition). A quick solution is given as the following. We consider a set of $N$ orthogonal pure $m$-partite states. The classical information is encoded against the states of the set. We also assume that the set is locally indistinguishable across every bipartition. (For sets which have local indistinguishability across bipartition(s), one can go through the Refs.~\cite{Halder19, Rout19, Zhang19, Halder20-1, Yuan20, Shi20, Rout21, Shi21, Wang21, Li21, Shi22} and the references therein.) Such a set is always sufficient for the implementation of the present task. Here the orthogonality is to preserve the condition that there must be a way for complete information extraction when it is required. Now, given a set of orthogonal quantum states, if the states cannot be perfectly distinguished by LOCC across every bipartition, then these states must also not be perfectly distinguished by LOCC in any multipartition. So, for such a set, does not matter how many parties are collaborating, in the newly produced partition, the set always remains locally indistinguishable and thus, the information, encoded against the states of the set, cannot be extracted completely. Probably, we are now ready to rephrase the main question which is addressed in this work: Is it possible to find more suitable sets to implement the present task compared to the sets which are locally indistinguishable across every bipartition? This question is particularly important when we do not drop the resource-efficient condition.

The answer to the above question is not obvious. In fact, when the number of parties is three, the sets which can be used to accomplish the present task, are indeed locally indistinguishable across every bipartition. This can be understood in the following way. Suppose, there are three parties $A$, $B$, and $C$. Then, in this case the only value of $k$ is 2. So, if any two of the three parties collaborate, then the partitions, which are produced due to collaboration, are $A-BC$, $B-AC$, and $C-AB$. Again, in a tripartite system these are the only possible bipartitions. Therefore, the tripartite sets of orthogonal states which are locally indistinguishable across the aforesaid bipartitions, are indeed locally indistinguishable across every bipartition. 

Nevertheless, when the number of parties increases, i.e., $m\geq4$, it is possible to show that there are sets which are not only sufficient to accomplish the present task but they may also be resource-efficient compared to the sets which are locally indistinguishable across every bipartition. For the construction of the present sets, we use pairwise orthogonal Greenberger–Horne–Zeilinger (GHZ) type states \cite{Greenberger07} (for distinguishability of GHZ basis, see \cite{Bandyopadhyay18}). We mention that here within a set, the states are pure and they are equally probable. We also mention that the entangled states, which are available as resource, are two-qubit maximally entangled states, which can be shared between two parties. 

The main contributions of this paper is given as the following: (i) We construct a class of sets which contains maximally entangled multi-qubit GHZ states. These sets are locally indistinguishable across some bipartitions but not in every bipartitions. Again, these sets are sufficient to accomplish the present task for certain values of $N$. (ii) We show that this sets can be more resource-efficient than the sets which are locally indistinguishable across every bipartition. (iii) We define a quantity $\Delta E$ as the difference in the number of entangled states which are consumed for complete information extraction in case of the present sets and the sets which are locally indistinguishable across every bipartition. We also show that $\Delta E$ increases with increasing $m$, $m\geq4$ and $m$ is either even or odd.  

Due to above findings, a few things are clear now. If for a given $m$-partite ($m\geq4$) set of orthogonal states, any $(m-1)$ parties collaborate and they are not able to extract the information completely, then it does not mean that all parties have to collaborate for complete information extraction. This fact can be utilized in an information processing protocol. In fact, equivalently, for the present protocol, it is not necessary to use an $m$-partite ($m\geq4$) set which is locally indistinguishable across every bipartition. Our task and corresponding examples also exhibit instances where more local indistinguishability cannot guarantee more efficiency.


\section{Results}\label{sec2}
We consider $m$-partite system where each party holds only one qubit. To encode $N$-level classical information, one needs a set of $N$ quantum states. Therefore, the cardinality of the considered set is $N$. In fact, $N$ changes with increasing $m$ as $N=m+2$ in our case. We also mention that here we consider only orthogonal pure states and perfect discrimination of these states is considered.

\subsection{Four-qubit case}
We consider a four-partite qubit system ($\mathcal{C}^2 \otimes \mathcal{C}^2 \otimes \mathcal{C}^2 \otimes \mathcal{C}^2$) shared between four parties, $A_1$, $A_2$, $A_3$, and $A_4$. Let us construct a set (say, $S^4_1$) of four-qubit states 
which only contains pure maximally entangled GHZ states. The form of the set is given below.
\begin{eqnarray}
S^4_1: \{\ket{0000} \pm \ket{1111}, \nonumber\\
\ket{1000} + \ket{0111}, \nonumber\\
\ket{0100} + \ket{1011}, \nonumber\\
\ket{0010} + \ket{1101}, \nonumber\\
\ket{0001} + \ket{1110}\}
\label{s1_4}
\end{eqnarray}
Note that, for simplicity we do not consider the normalisation factors. These factors do not have any relevance in the discrimination process. For now on, we can use the notation $d\otimes d^\prime$ instead of $\mathcal{C}^d\otimes\mathcal{C}^{d^\prime}$. For $S^4_1$, the value of $N$ is six.  

\begin{prop}\label{prop1}
$S^4_1$ is locally indistinguishable across every $2 \otimes 2^3$ bipartition but locally distinguishable across every $4 \otimes 4$ bipartition. 
\end{prop}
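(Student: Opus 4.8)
The plan is to prove the two halves separately and to lean on the permutation symmetry of $S^4_1$: relabelling the four parties permutes the single-excitation GHZ states $\ket{1000},\ket{0100},\ket{0010},\ket{0001}$ among themselves and fixes each of $\ket{0000}+\ket{1111}$ and $\ket{0000}-\ket{1111}$. Hence it suffices to treat one representative cut of each type, say $A_1|A_2A_3A_4$ for the $2\otimes 2^3$ case and $A_1A_2|A_3A_4$ for the $4\otimes 4$ case, and to transport the conclusions to the remaining cuts by symmetry.

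For indistinguishability across $A_1|A_2A_3A_4$ the idea is not to attack all six states at once, but to isolate the triple $\ket{0000}+\ket{1111}$, $\ket{0000}-\ket{1111}$, $\ket{1000}+\ket{0111}$. In the chosen split these three are supported entirely on $\mathbb{C}^2_{A_1}\otimes V$ with $V=\mathrm{span}\{\ket{000},\ket{111}\}$, and after the identification $\ket{000}\leftrightarrow\ket{0}$, $\ket{111}\leftrightarrow\ket{1}$ on the grouped side they are exactly the Bell states $\Phi^+,\Phi^-,\Psi^+$. I would then invoke the established impossibility of perfectly discriminating three Bell states by LOCC. To promote this from $\mathbb{C}^2\otimes\mathbb{C}^2$ to the true $\mathbb{C}^2\otimes\mathbb{C}^8$ arena, I would note that the three states have no weight outside $V$ on the grouped side, so the grouped party may begin by projecting onto $V$ --- an orthogonality-preserving step acting as the identity on these states --- after which any large-space LOCC protocol restricts to one on $\mathbb{C}^2_{A_1}\otimes V$. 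Distinguishing all six states would in particular distinguish this triple, which is impossible; hence $S^4_1$ is locally indistinguishable across $A_1|A_2A_3A_4$, and across every $2\otimes 2^3$ cut (for the cut isolating $A_j$ one replaces the third state by the GHZ state whose lone $1$ sits on $A_j$).

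For distinguishability across $A_1A_2|A_3A_4$ I would exhibit an explicit protocol. Every state has the form $\ket{u}_{A_1A_2}\ket{v}_{A_3A_4}+\ket{\bar u}_{A_1A_2}\ket{\bar v}_{A_3A_4}$, so its left support lies wholly in either the even-parity subspace $\mathrm{span}\{\ket{00},\ket{11}\}$ or the odd-parity subspace $\mathrm{span}\{\ket{01},\ket{10}\}$. The grouped party $A_1A_2$ therefore measures this parity --- an orthogonality-preserving local projective measurement --- and announces the outcome, and $A_3A_4$ does the same on its side. These two commuting parity measurements partition the six states into branches each holding at most two mutually orthogonal states, and since any two orthogonal states are perfectly LOCC-distinguishable (Walgate--Short--Hardy--Vedral), the discrimination completes. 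The identical parity bookkeeping handles the cuts $A_1A_3|A_2A_4$ and $A_1A_4|A_2A_3$.

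The step I expect to be the real obstacle is the rigour of the indistinguishability half, because a naive ``who measures first'' argument is inconclusive here: while the lone qubit party admits only trivial (identity-proportional) orthogonality-preserving measurements, the grouped eight-dimensional party does admit nontrivial ones, so one cannot simply rule out a useful first move by the group. The subspace-reduction above is precisely what sidesteps this difficulty, trading the uncontrolled eight-dimensional problem for the settled three-Bell-state problem; the one point demanding care is checking that the initial projection onto $V$ is genuinely non-disturbing on the chosen triple, so that no distinguishing power is lost in the reduction.
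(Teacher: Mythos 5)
Your proposal is correct and follows essentially the same route as the paper's own proof: the indistinguishability half reduces to three Bell states supported on $\mathrm{span}\{\ket{000},\ket{111}\}$ of the grouped side and invokes the known LOCC-indistinguishability of three Bell states, while the distinguishability half uses exactly the paper's projective measurements (your ``parity'' projectors onto $\mathrm{span}\{\ket{00},\ket{11}\}$ and $\mathrm{span}\{\ket{01},\ket{10}\}$ are the paper's $\ket{0}\!\bra{0}+\ket{3}\!\bra{3}$ and $\ket{1}\!\bra{1}+\ket{2}\!\bra{2}$) followed by Walgate et al.\ for the residual orthogonal pairs. Your additions --- the permutation-symmetry reduction to one representative cut and the explicit justification that projecting onto the two-dimensional subspace loses no distinguishing power --- are sound refinements of steps the paper treats more informally.
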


\begin{proof}
Let us first consider the bipartition as $A_1 - A_2 A_3 A_4$. We denote the 3-qubit basis of $A_2 A_3 A_4$ (or $A'_3$) as $\{\ket{i'_3}\}_{i=0}^7$, where $\ket{0'_3}\equiv \ket{000}$, $\ket{1'_3}\equiv \ket{001}$ and so on. Hence the states in $S^4_1$ can be rewritten as \{$(\ket{0}\ket{0'_3}\pm \ket{1}\ket{7'_3})$, $(\ket{1}\ket{0'_3}+\ket{0}\ket{7'_3})$, $(\ket{0} \ket{4'_3}+\ket{1} \ket{3'_3})$, $(\ket{0} \ket{2'_3} + \ket{1}\ket{5'_3})$, $(\ket{0} \ket{1'_3}+\ket{1}\ket{6'_3})$\}. Now, the side $A'_3$ performs a measurement on the three qubits. Note that while it is possible to distinguish the last three states by LOCC, it is impossible to distinguish among the first three. The reason behind this indistinguishability is that the three states resemble three Bell states of two qubits. Now, it has already been shown in literature that it is not possible to distinguish three or four Bell states perfectly via LOCC \cite{Ghosh01}. Following similar technique, if we consider any $2\otimes 2^3$ bipartition, it is always possible to find three Bell-like indistinguishable states. Thus, the above set cannot be perfectly distinguished by LOCC in these bipartitions.

On the other hand, when we consider the bipartition of the form $A_1 A_2 - A_3 A_4$, we show that it is possible to distinguish the states of $S^4_1$ with local measurements. Here, we denote the 2-qubit basis of the first subsystem $A_1 A_2$ (or, $A^{(1)}_2$) as $\{\ket{j^{(1)}_2}\}_{j=0}^3$ such that $\ket{0^{(1)}_2}\equiv \ket{00}$, $\ket{1^{(1)}_2} \equiv \ket{01}$, and so on. Similarly, for the second subsystem $A_3 A_4$ (or, $A^{(2)}_2$) as $\{\ket{j^{(2)}_2}\}_{j=0}^3$ such that, $\ket{0^{(2)}_2}\equiv \ket{00}$, $\ket{1^{(2)}_2}\equiv \ket{01}$ and so on. Hence, the states in $S^4_1$ can be re-written as, \{$(\ket{0^{(1)}_2}\ket{0^{(2)}_2}\pm \ket{3^{(1)}_2}\ket{3^{(2)}_2})$, $(\ket{2^{(1)}_2}\ket{0^{(2)}_2}+\ket{1^{(1)}_2}\ket{3^{(2)}_2})$, $(\ket{1^{(1)}_2}\ket{0^{(2)}_2}+\ket{2^{(1)}_2}\ket{3^{(2)}_2})$, $(\ket{0^{(1)}_2}\ket{2^{(2)}_2}+\ket{3^{(1)}_2}\ket{1^{(2)}_2})$, $(\ket{0^{(1)}_2}\ket{1^{(2)}_2}+\ket{3^{(1)}_2}\ket{2^{(2)}_2})$\}. Notice that when $A^{(2)}_2$ performs the projective measurements, where the projectors are $(\ket{0^{(2)}_2}\bra{0^{(2)}_2}+\ket{3^{(2)}_2}\bra{3^{(2)}_2})$ and $(\ket{1^{(2)}_2}\bra{1^{(2)}_2}+\ket{2^{(2)}_2}\bra{2^{(2)}_2})$, it is possible to distinguish between the subspaces, spanned by the first four states and the last two. Now, for the last two states, being orthogonal pure states, they are always locally distinguishable \cite{Walgate00}. On the other hand, for the first four states, one can consider projective measurement on $A^{(1)}_2$, where the projectors are given by $(\ket{0^{(1)}_2}\bra{0^{(1)}_2}+\ket{3^{(1)}_2}\bra{3^{(1)}_2})$ and $(\ket{1^{(1)}_2}\bra{1^{(1)}_2}+\ket{2^{(1)}_2}\bra{2^{(1)}_2})$. This is to separate out the subspaces, spanned by the first two and last two states. Finally, after subspace discrimination, only two orthogonal pure states are left, which can be distinguished by LOCC \cite{Walgate00}. This analysis also holds for other $4\otimes 4$ bipartition. These complete the proof. 
\end{proof}

Here we consider four parties: $A_1$, $A_2$, $A_3$, and $A_4$. We suppose that $k$ parties among them collaborate then either $k = 2$ or $k=3$. If $k=3$ then the possible bipartitions are $A_1-A_2A_3A_4$, $A_2-A_1A_3A_4$, $A_3-A_1A_2A_4$, and $A_4-A_1A_2A_3$. Again, if $k=2$, then the possible partitions are tripartitions which are given by $A_1A_2-A_3-A_4$, $A_1A_3-A_2-A_4$, $A_1A_4-A_2-A_3$, $A_1-A_2A_3-A_4$, $A_1-A_2A_4-A_3$, and $A_1-A_2-A_3A_4$. Clearly, we can encode the information against the four-qubit states of a set which is locally indistinguishable across all $2\otimes8$ bipartitions. Such a set can be found from Eq.~(\ref{s1_4}). Now, we want to think about the complete information extraction part. The set $S^4_1$ is distinguishable across $A_1A_2-A_3A_4$ bipartition. In this case, if the parties $A_1, A_2$ and $A_3, A_4$ share two-qubit pure maximally entangled states (Bell states), then it is sufficient to locally distinguish the states perfectly. It is due to a teleportation \cite{Bennett93} based protocol. $A_1$ can teleport the qubit to the location of $A_2$. Similarly, $A_3$ can teleport the qubit to the location of $A_4$. In this way, the bipartition $A_1A_2-A_3A_4$ is produced.

On the other hand, it is already explained that given any set which is locally indistinguishable across every bipartition, are also sufficient to accomplish the present task. Now, for four qubits, two entangled states cannot be sufficient for complete information extraction using such sets. Because for a four-qubit set $S^4_2$ which is locally indistinguishable across every bipartition, if one uses two bipartite maximally entangled states and follow a teleportation based protocol, then ultimately, a new bipartition will be produced in which $S^4_2$ is again, locally indistinguishable. So, entangled states required for complete information extraction when the set is $S^4_1$, given by $E(S^4_1)$ = 2 and similarly, $E(S^4_2)$ = 3 (at least necessary). Thus, $\Delta E = E(S^4_2)-E(S^4_1) = 1$.



\subsection{Six-qubit case}
Now we consider the case consisting of six-qubit ($\mathcal{C}^2 \otimes \mathcal{C}^2 \otimes \mathcal{C}^2 \otimes \mathcal{C}^2 \otimes \mathcal{C}^2 \otimes \mathcal{C}^2$) shared between six parties, $A_1$, $A_2$, $A_3$, $A_4$, $A_5$ and $A_6$. We construct a set ($S^6_1$) of six-qubit state 
which only contains pure maximally entangled GHZ states. The form of the set is given below. 
\begin{eqnarray}
S^6_1: \{\ket{000000} \pm \ket{111111}, \nonumber\\
\ket{100000} + \ket{011111}, \nonumber\\
\ket{010000} + \ket{101111}, \nonumber\\
\ket{001000} + \ket{110111}, \nonumber\\
\ket{000100} + \ket{111011}, \nonumber\\
\ket{000010} + \ket{111101}, \nonumber\\
\ket{000001} + \ket{111110}\}
\label{s1_6}
\end{eqnarray}
For simplicity we discard the normalisation as before because it does not play any important role in our protocol.
\begin{prop}
$S^6_1$ is locally indistinguishable across every $2 \otimes 2^5$ bipartition but locally distinguishable across every $4 \otimes 4 \otimes 4$ tripartition. 
\end{prop}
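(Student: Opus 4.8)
The plan is to follow the template of Proposition~\ref{prop1} (the four-qubit case), after a symmetry reduction. The key preliminary observation is that $S^6_1$ is invariant, as a set, under any permutation $\sigma$ of the six qubits: the GHZ pair $\ket{000000}\pm\ket{111111}$ is fixed, while each ``single-flip'' state $\ket{e_i}+\ket{\overline{e_i}}$, with $e_i$ carrying a single $1$ in position $i$, is sent to the flip state at position $\sigma(i)$. Hence it suffices to treat one representative of each type of partition, and every other $2\otimes2^5$ bipartition or $4\otimes4\otimes4$ tripartition then follows by relabelling.

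For the indistinguishability part I would fix the bipartition $A_1 - A_2A_3A_4A_5A_6$ and name the relevant five-qubit basis vectors $\ket{\tilde 0}\equiv\ket{00000}$ and $\ket{\tilde 1}\equiv\ket{11111}$. Written this way, exactly three members of $S^6_1$, namely $\ket{0}\ket{\tilde 0}\pm\ket{1}\ket{\tilde 1}$ and $\ket{1}\ket{\tilde 0}+\ket{0}\ket{\tilde 1}$ coming from the two GHZ states and the position-$1$ flip state, live entirely inside the $2\otimes2$ subspace $\mathrm{span}\{\ket{0},\ket{1}\}\otimes\mathrm{span}\{\ket{\tilde 0},\ket{\tilde 1}\}$, where they realize three of the four Bell states. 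Since any LOCC protocol distinguishing all eight states would in particular distinguish these three, and three Bell states cannot be perfectly distinguished by LOCC \cite{Ghosh01}, the whole set is locally indistinguishable across this bipartition, and by the symmetry above across every $2\otimes2^5$ bipartition.

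For the distinguishability part I would take the canonical tripartition $A_1A_2 - A_3A_4 - A_5A_6$ and re-express every state in the two-qubit basis $\{\ket{0},\ket{1},\ket{2},\ket{3}\}$ of each pair, exactly as in the proof of Proposition~\ref{prop1}. Each state then has the form $\ket{a}\ket{b}\ket{c}+\ket{\bar a}\ket{\bar b}\ket{\bar c}$, with a possible minus sign on the GHZ partner, where the bar is the bitwise complement; note that $Z=\{0,3\}$ and $O=\{1,2\}$ are each closed under complementation. The decisive point is that for every state, each pair's label and its complement fall in the same class, so the local projective measurement $\{\,\ket{0}\bra{0}+\ket{3}\bra{3},\ \ket{1}\bra{1}+\ket{2}\bra{2}\,\}$ performed by each group is non-destructive and does not collapse the superposition. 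The three binary outcomes record a ``class triple,'' and a direct check shows the eight states split into the four triples $(Z,Z,Z),(O,Z,Z),(Z,O,Z),(Z,Z,O)$ with exactly two states each; the outcome therefore pins the state down to an orthogonal pure pair, which is distinguished by LOCC \cite{Walgate00}. By permutation symmetry the same protocol works for every $4\otimes4\otimes4$ tripartition.

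The main obstacle I anticipate is not any single computation but verifying the two structural claims cleanly. On the indistinguishable side one must be sure that a triple of Bell-like states always survives for each isolated qubit, which the permutation symmetry guarantees. On the distinguishable side the delicate point is that the class triples are genuinely distinct and that the $Z$-versus-$O$ measurement leaves the relevant superpositions intact: were any pair's two labels to land in different classes, the measurement would destroy coherence and the simple peeling argument would break. The complement-invariance of $\{0,3\}$ and $\{1,2\}$ is precisely what rescues it, so I would make that invariance the explicit pivot of the argument.
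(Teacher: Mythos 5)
Your proposal is correct and follows essentially the same route as the paper: reduce the indistinguishability claim to three Bell states in a $2\otimes2$ subspace across the $1$-vs-$5$ cut (citing \cite{Ghosh01}), and for distinguishability use the pairwise projectors $\{\ket{0}\bra{0}+\ket{3}\bra{3},\,\ket{1}\bra{1}+\ket{2}\bra{2}\}$ on each two-qubit group followed by \cite{Walgate00} for the surviving orthogonal pair. The only cosmetic differences are that you run the three group measurements in parallel via the ``class triple'' rather than peeling off pairs sequentially, and you make the permutation invariance of $S^6_1$ explicit where the paper simply asserts that the analysis carries over to the other partitions.
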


\begin{proof}
Following the similar mechanism as used in the proof of the previous proposition, we consider first the bipartition as $A_1 - A_2 A_3 A_4 A_5 A_6$. We then define the 5-qubit basis of $A_2 A_3 A_4 A_5 A_6$ (or, $A'_5$) as $\{\ket{i'_5}\}^{31}_{i=0}$ where, $\ket{0'_5}\equiv\ket{00000}$, $\ket{1'_5}\equiv\ket{00001}$,$\cdots$, $\ket{31'_5}\equiv\ket{11111}$. Hence, the set $S^6_1$ can be rewritten as, \{$(\ket{0}\ket{0'_5} \pm \ket{1}\ket{31'_5})$, $(\ket{1}\ket{0'_5} + \ket{0}\ket{31'_5})$, $(\ket{0}\ket{16'_5} + \ket{1}\ket{15'_5})$, $(\ket{0}\ket{8'_5} + \ket{1}\ket{23'_5})$, $(\ket{0}\ket{4'_5} + \ket{1}\ket{27'_5})$, $(\ket{0}\ket{2'_5} + \ket{1}\ket{29'_5})$, $(\ket{0}\ket{1'_5} + \ket{1}\ket{30'_5})$\}. Note that if on the second composite system i.e., on $A'_5$ projective measurements are performed then the last five states (last five states of the above equation) can be distinguished. But the first three states can be seen as three Bell states in the bipartite system $A_1-A'_5$ which cannot be perfectly distinguished by LOCC \cite{Ghosh01}. Following similar technique, if we consider any $2\otimes 2^5$ bipartition, it is always possible to find three Bell-like indistinguishable states. Thus, the above set cannot be perfectly distinguished by LOCC in these bipartitions.

On the other hand, while considering the tripartition, the subsystems can be grouped as $A_1 A_2$ (or, $A^{(1)}_2$), $A_3 A_4$ (or, $A^{(2)}_2$) and $A_5 A_6$ (or, $A^{(3)}_2$). We define the new basis of the corresponding subsystems as, $\{j^{(1)}_2\}_{j=0}^3$, $\{j^{(2)}_2\}_{j=0}^3$, and $\{j^{(3)}_2\}_{j=0}^3$ respectively (as defined in the proof of the Proposition \ref{prop1}). Hence the states in the set $S^6_1$ can be rewritten as, \{($\ket{0^{(1)}_2}\ket{0^{(2)}_2}\ket{0^{(3)}_2}\pm\ket{3^{(1)}_2}\ket{3^{(2)}_2}\ket{3^{(3)}_2}$), ($\ket{2^{(1)}_2}\ket{0^{(2)}_2}\ket{0^{(3)}_2}+\ket{1^{(1)}_2}\ket{3^{(2)}_2}\ket{3^{(3)}_2}$), ($\ket{1^{(1)}_2}\ket{0^{(2)}_2}\ket{0^{(3)}_2}+\ket{2^{(1)}_2}\ket{3^{(2)}_2}\ket{3^{(3)}_2}$), ($\ket{0^{(1)}_2}\ket{2^{(2)}_2}\ket{0^{(3)}_2}+\ket{3^{(1)}_2}\ket{1^{(2)}_2}\ket{3^{(3)}_2}$), ($\ket{0^{(1)}_2}\ket{1^{(2)}_2}\ket{0^{(3)}_2}+\ket{3^{(1)}_2}\ket{2^{(2)}_2}\ket{3^{(3)}_2}$), ($\ket{0^{(1)}_2}\ket{0^{(2)}_2}\ket{2^{(3)}_2}+\ket{3^{(1)}_2}\ket{3^{(2)}_2}\ket{1^{(3)}_2}$), ($\ket{0^{(1)}_2}\ket{0^{(2)}_2}\ket{1^{(3)}_2}+\ket{3^{(1)}_2}\ket{3^{(2)}_2}\ket{2^{(3)}_2}$)\}. First, the subsystem $A^{(3)}_2$ performs projective measurement with projectors given as, $(\ket{0^{(3)}_2}\bra{0^{(3)}_2}+\ket{3^{(3)}_2}\bra{3^{(3)}_2})$ and $(\ket{1^{(3)}_2}\bra{1^{(3)}_2}+\ket{2^{(3)}_2}\bra{2^{(3)}_2})$ revealing the subspaces consisting the first six states and the last two. Note that, the last two states can be seen as a pair of orthogonal maximally entangled states in the newly defined basis for the grouped susbsystems and hence can be distinguished via LOCC \cite{Walgate00}. Similarly, for first six states when the subsystem $A^{(2)}_2$ performs the projective measurement with projectors $(\ket{0^{(2)}_2}\bra{0^{(2)}_2}+\ket{3^{(2)}_2}\bra{3^{(2)}_2})$ and $(\ket{1^{(2)}_2}\bra{1^{(2)}_2}+\ket{2^{(2)}_2}\bra{2^{(2)}_2})$ to separate out between the first four and last two states, the last two are again distinguishable by performing LOCC \cite{Walgate00}. Following similar logic when we consider only first four states, on the first subsystem $A^{(1)}_2$ a suitable projective measurement can be performed. The corresponding projectors are $(\ket{0^{(1)}_2}\bra{0^{(1)}_2}+\ket{3^{(1)}_2}\bra{3^{(1)}_2})$ and $(\ket{1^{(1)}_2}\bra{1^{(1)}_2}+\ket{2^{(1)}_2}\bra{2^{(1)}_2})$. Then, it is possible to distinguish between the first two and the other two states of the first four states. Note that the two pure states in the groups are mutually orthogonal to each other and hence can be distinguished perfectly via LOCC \cite{Walgate00}. This analysis also holds for other $4\otimes 4 \otimes 4$ tripartitions. These suffice to prove the proposition. 
\end{proof}

In this case we consider a six qubit system consisting of parties $A_1$, $A_2$, $A_3$, $A_4$, $A_5$ and $A_6$. As can be seen from the above proof, the set of states $S^6_1$ in Eq.~(\ref{s1_6}) cannot be distinguished locally in any $2\otimes 2^5$ bipartition. As a result of which if any $k$ parties collaborate and $(m-k)$ parties stand alone, in the produced bipartition the set remains locally indistinguishable. On the other hand, it can be locally distinguished in the tripartition $A_1 A_2$-$A_3 A_4$-$A_5 A_6$. Note that, following the similar logic as of the four-qubit system in this case, the subsystems can be grouped to produce any $4\otimes 4\otimes 4$ bipartition and for this it is required to share three bipartite maximally entangled states. This is to reveal an eight level information perfectly. Hence, in this case, $E(S^6_1)$ = 3 for complete information extraction. On the other hand, when we have a set (say, $S^6_2$) which is locally indistinguishable across every bipartition, then following the similar logic as in the case of four qubits, we have $E(S^6_2)$ = 5 for revealing the information perfectly. Hence, for six qubits we have, $\Delta E = E(S^6_2)-E(S^6_1) = 2$. Notice that the difference $\Delta E$ is increased as the number of qubits is increased from four to six.

\subsection{Generalisation to $m$-qubit case}

In this section we try to generalise the above findings for an $m$-qubit (considering $m$ to be even) $(\mathcal{C}_1^2 \otimes \mathcal{C}_2^2\otimes\cdots\otimes \mathcal{C}_m^2)$ system shared between parties $\{A_i\}_{i=1}^m$. Following the same trajectory as before we construct a set $S_1^m$ of $m$-qubit states and it is given as the following.
\begin{eqnarray}
S^m_1: &\{\ket{0_1 0_2 \cdots 0_m} \pm \ket{1_1 1_2 \cdots 1_m}, \nonumber\\
&\ket{1_1 0_2 \cdots 0_m} + \ket{0_1 1_2 \cdots 1_m}, \nonumber\\
&\ket{0_1 1_2 \cdots 0_m} + \ket{1_1 0_2 \cdots 1_m}, \nonumber\\
&\vdots \nonumber\\
&\ket{0_1 \cdots 1_{m-1} 0_m} + \ket{1_1 \cdots 0_{m-1} 1_m}, \nonumber\\
&\ket{0_1 0_2 \cdots 1_m} + \ket{1_1 1_2 \cdots 0_m}\}
\label{s1_m}
\end{eqnarray}
For simplicity we discard the normalisation as before as it does not interrupt our findings. Now, we are ready to state the proposition for the general $m$-qubit system. 
\begin{prop}
$S_1^m$ is locally indistinguishable across every $2\otimes2^{(m-1)}$ bipartition but locally distinguishable across every $4\otimes4\otimes \cdots \otimes 4$ $m/2$-partition.
\label{prop_m}
\end{prop}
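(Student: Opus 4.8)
The plan is to follow the same two-part strategy as in the proofs of the four- and six-qubit propositions, recasting everything in terms of a position index so that the argument closes for general even $m$. The indistinguishability and the distinguishability claims are handled separately.

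For the indistinguishability claim, I would fix an arbitrary party $A_j$ and consider the cut $A_j-(\text{rest})$. Write $\ket{R_0}$ for the $(m-1)$-qubit basis state in which every qubit other than $j$ equals $0$, and $\ket{R_1}$ for the one in which every such qubit equals $1$. Re-expressing the states of $S^m_1$ in this cut, the two GHZ states together with the single state that flips position $j$ (the $(j+2)$-th state in Eq.~(\ref{s1_m})) take the form $\ket{0}\ket{R_0}\pm\ket{1}\ket{R_1}$ and $\ket{1}\ket{R_0}+\ket{0}\ket{R_1}$, that is, three Bell states supported on the two-dimensional subspace $\mathrm{span}\{\ket{R_0},\ket{R_1}\}$ of the large side. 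Every remaining state flips some position $\ell\neq j$, so on the large side its two computational components are basis vectors orthogonal to both $\ket{R_0}$ and $\ket{R_1}$, and mutually orthogonal across distinct $\ell$; a computational-basis measurement on the $(m-1)$-qubit side therefore peels these states off, leaving exactly the Bell triple, which cannot be perfectly distinguished by LOCC~\cite{Ghosh01}. Since $j$ was arbitrary, this establishes local indistinguishability across every $2\otimes2^{(m-1)}$ bipartition.

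For the distinguishability claim, I would group the qubits into the $m/2$ pairs $G_k=A_{2k-1}A_{2k}$ and use the two-qubit group basis $\{\ket{0},\ket{1},\ket{2},\ket{3}\}$ exactly as before. In this basis the two GHZ states become $\ket{0}\cdots\ket{0}\pm\ket{3}\cdots\ket{3}$, while the states flipping position $2k-1$ and $2k$ are precisely the two whose only non-$\{0,3\}$ index lies in group $G_k$ (equal to $2$ or $1$ on the first term and correspondingly $1$ or $2$ on the second). I would then run the sequential protocol of measuring $G_{m/2},G_{m/2-1},\ldots,G_1$ in turn, each with the projectors $\ket{0}\bra{0}+\ket{3}\bra{3}$ and $\ket{1}\bra{1}+\ket{2}\bra{2}$. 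The first group returning the latter outcome pins down which pair of qubits carries the flip and isolates exactly the two orthogonal states flipping inside that group, which are then distinguished since any two orthogonal pure states are LOCC-distinguishable~\cite{Walgate00}; if every group returns the $\{0,3\}$ outcome, only the orthogonal pair $\ket{0}\cdots\ket{0}\pm\ket{3}\cdots\ket{3}$ remains, again handled by~\cite{Walgate00}.

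The part that needs the most care is checking that this peeling procedure is internally consistent for every $m$: one must verify that obtaining the $\{0,3\}$ outcome on $G_{m/2},\dots,G_{k+1}$ leaves precisely the two GHZ states together with the flip-states of $G_1,\dots,G_k$, so that the measurement on $G_k$ genuinely isolates a single orthogonal pair, and that the residual states at each stage are still orthogonal pure states to which~\cite{Walgate00} applies. This is a bookkeeping induction on the group index rather than a real obstacle; the only conceptual point is the repeated appeal to LOCC-distinguishability of two orthogonal pure states, now invoked across the $m/2$ grouped parties. The identical construction applies to the remaining $4\otimes4\otimes\cdots\otimes4$ $m/2$-partitions obtained by permuting which pairs are grouped, which completes the proof.
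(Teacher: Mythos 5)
Your proposal is correct and takes essentially the same route as the paper: for the $2\otimes 2^{(m-1)}$ cut you isolate the two GHZ states together with the state flipping the lone party as three Bell-like states on a common two-dimensional subspace of the large side and invoke the LOCC-indistinguishability of three Bell states, and for the $m/2$-partition you run the same sequential group-by-group projective peeling onto $\{\ket{0},\ket{3}\}$ versus $\{\ket{1},\ket{2}\}$, finishing each branch with the LOCC-distinguishability of two orthogonal pure states. Your handling of an arbitrary cut $A_j$ and of why the remaining flip-states decouple is a bit more explicit than the paper's ``following similar technique,'' but the substance is identical.
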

\begin{proof}
The sketch of the proof relies on the good old method of mathematical induction. First we consider the bipartition as, $A_1 - \{A_i\}_{i=2}^m$. We define the $(m-1)$-qubit basis of $\{A_i\}_{i=2}^m$ (or, $A'_{m-1}$) as $\{\ket{i'_{m-1}}\}_{i=0}^{(2^{m-1}-1)}$. Therefore the states in the set $S_1^m$ can be rewritten as, $\{(\ket{0_1}\ket{0'_{m-1}} \pm \ket{1_1}\ket{(2^{m-1}-1)'_{m-1}}), (\ket{1_1}\ket{0'_{m-1}}+\ket{0_1}\ket{(2^{m-1}-1)'_{m-1}}), \cdots \}$. Note that, the first three states in the set are three orthogonal maximally entangled bipartite states (Bell-like states) while $(m-1)$ parties collaborate between themselves to form the second composite subsystem $A'_{m-1}$. Hence these states can never be distinguished via LOCC \cite{Ghosh01}. Following similar technique, if we consider any $2\otimes 2^{m-1}$ bipartition, it is always possible to find three Bell-like indistinguishable states. Thus, the above set cannot be perfectly distinguished by LOCC in these bipartitions.

Now, on the other hand, we consider the $m/2$-partition ($4\otimes4\otimes\cdots\otimes4$) and see if the states remain locally indistinguishable. For this purpose following the similar technique used in the previous prepositions, we group the subsystems as $A_1 A_2$ (or, $A^{(1)}_2$), $A_3 A_4$ (or, $A^{(2)}_2$), $\cdots$ and $A_{m-1} A_m$ (or, $A^{(m/2)}_2$). Now we define the basis of the newly defined subsystems as $\{\ket{j^{(1)}_2}\}_{j=0}^3$, $\{\ket{j^{(2)}_2}\}_{j=0}^3$, $\cdots$, and $\{\ket{j^{(m/2)}_2}\}_{j=0}^3$ respectively. Hence the states in $S_1^m$ can be rewritten as, \{($\ket{0^{(1)}_2}\ket{0^{(2)}_2} \cdots \ket{0^{(m/2)}_2}\pm\ket{3^{(1)}_2}\ket{3^{(2)}_2} \cdots \ket{3^{(m/2)}_2}$), ($\ket{2^{(1)}_2}\ket{0^{(2)}_2} \cdots \ket{0^{(m/2)}_2}+\ket{1^{(1)}_2}\ket{3^{(2)}_2} \cdots \ket{3^{(m/2)}_2}$), $\cdots$, ($\ket{0^{(1)}_2}\ket{0^{(2)}_2} \cdots \ket{2^{(m/2)}_2}+\ket{3^{(1)}_2}\ket{3^{(2)}_2} \cdots \ket{1^{(m/2)}_2}$), and ($\ket{0^{(1)}_2}\ket{0^{(2)}_2} \cdots \ket{1^{(m/2)}_2}+\ket{3^{(1)}_2}\ket{3^{(2)}_2} \cdots \ket{2^{(m/2)}_2}$)\}. We claim that these states are locally distinguishable in this given partition. Note that, when the subsystem $A_2^{(m/2)}$ performs the projective measurement, given by the projectors, $(\ket{0^{(m/2)}_2}\bra{0^{(m/2)}_2}+\ket{3^{(m/2)}_2}\bra{3^{(m/2)}_2})$ and $(\ket{1^{(m/2)}_2}\bra{1^{(m/2)}_2}+\ket{2^{(m/2)}_2}\bra{2^{(m/2)}_2})$, it separates the last two states in the set and these two states are mutually orthogonal to each other as can be easily seen. Hence, they can be distinguished via LOCC \cite{Walgate00}. Next we consider the subsystem $A_2^{(m/2-1)}$ and it performs the similar projective measurement separating two more mutually orthogonal states, and hence they are locally distinguishable. The procedure can be repeated till the subsystem $A_2^{(1)}$ while finally separates between four remaining states into two sets of a pair of pure orthogonal states which are again locally distinguishable \cite{Walgate00}. This analysis also holds for other $4\otimes 4 \otimes \cdots\otimes4$ (m/2)-partitions. Hence the claim.
\end{proof}

\subsection{Resource efficiency of the task}

Here, we see how the introduced set of states for $m$-qubit system ($m$ being even) $S_1^m$ is more useful than a set of states say, $S_2^m$ which is locally indistinguishable across every bipartition. Note that for a $m$-party system when they have access to the set $S^m_2$, to perform the task efficiently they need to share at least $(m-1)$ bipartite entangled states. The logic is same as for the previously discussed two cases. We assume that the shared states are maximally entangled and they are used in a teleportation based protocol. Then, if the number of such states is less than $(m-1)$ for the set $S^m_2$, it will give rise to a new bipartition along which the set would be locally indistinguishable again. On the other hand, if the set is $S_1^m$ then it is possible to reveal the information completely with less number of maximally entangled states.

\begin{theorem}
The number of bipartite entangled states required for the perfect simulation of the present task with $S_1^m$ is $\frac{m}{2}$ and hence the difference in resource requirement for complete information extraction corresponding to the sets $S_1^m$ and $S_2^m$ grows with the number of parties among which the composite quantum system is distributed.
\end{theorem}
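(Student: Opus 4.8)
The plan is to establish the theorem in two halves: first pin down $E(S_1^m)=\tfrac{m}{2}$ exactly, then recall why $E(S_2^m)=m-1$, and finally read off $\Delta E$. The upper bound on $E(S_1^m)$ reuses the teleportation-based grouping already exploited in the four- and six-qubit cases. Pairing the parties as $(A_1A_2),(A_3A_4),\dots,(A_{m-1}A_m)$ and sharing one Bell state inside each pair, party $A_{2i-1}$ teleports its qubit to $A_{2i}$; this consumes exactly $\tfrac{m}{2}$ Bell states and realizes the $4\otimes4\otimes\cdots\otimes4$ $(m/2)$-partition. By Proposition~\ref{prop_m} the set $S_1^m$ is LOCC-distinguishable across this partition, so $\tfrac{m}{2}$ entangled states suffice and $E(S_1^m)\le \tfrac{m}{2}$.

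The matching lower bound is where the real work sits. The key monotonicity fact, already used informally in the Introduction, is that if a set is LOCC-indistinguishable across a bipartition $X\,|\,\bar X$ then it stays indistinguishable across every partition refining it, since splitting $\bar X$ further only removes collaborative power. Proposition~\ref{prop_m} gives indistinguishability of $S_1^m$ across every $2\otimes2^{m-1}$ cut, i.e.\ across every cut that isolates a single party. Hence any effective partition in which some party stands alone is still indistinguishable, so a successful protocol must leave no singleton group. In the teleportation model, ending with the $m$ qubits merged into $p$ co-located blocks costs $m-p$ teleportations, and the no-singleton constraint forces $p\le \tfrac{m}{2}$, whence the cost is $m-p\ge \tfrac{m}{2}$. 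Combining with the upper bound yields $E(S_1^m)=\tfrac{m}{2}$.

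For $S_2^m$ the same monotonicity fact, applied to a set indistinguishable across \emph{every} bipartition, shows it is indistinguishable across every partition with at least two blocks; only a single global location permits distinguishing. Gathering all $m$ qubits into one location costs $m-1$ teleportations, giving $E(S_2^m)=m-1$. Subtracting, $\Delta E=E(S_2^m)-E(S_1^m)=(m-1)-\tfrac{m}{2}=\tfrac{m}{2}-1$, which grows linearly in $m$ and reproduces $\Delta E=1,2$ for $m=4,6$ as found earlier.

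The main obstacle I anticipate is making the lower bound for $E(S_1^m)$ fully rigorous beyond the teleportation picture: one must argue that no cleverer entanglement-assisted LOCC scheme consuming fewer than $\tfrac{m}{2}$ Bell states can distinguish $S_1^m$. I would handle this by stating the refinement/monotonicity lemma explicitly and pairing it with a careful accounting of how each shared Bell pair can merge at most one additional party into a common location, so that fewer than $\tfrac{m}{2}$ pairs necessarily leave an isolated party and hence an indistinguishable $2\otimes2^{m-1}$ cut.
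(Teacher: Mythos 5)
Your proposal is correct and follows essentially the same route as the paper: the upper bound $E(S_1^m)\le \tfrac{m}{2}$ comes from pairing the parties and teleporting within each pair to realize the $4\otimes 4\otimes\cdots\otimes 4$ partition of Proposition~\ref{prop_m}, and $E(S_2^m)=m-1$ comes from the observation that any fewer teleportations leave a nontrivial bipartition across which $S_2^m$ is still indistinguishable, giving $\Delta E=(m-1)-\tfrac{m}{2}=\tfrac{m-2}{2}$. The one place you go beyond the paper is the matching lower bound $E(S_1^m)\ge\tfrac{m}{2}$: the paper quietly sidesteps this by \emph{defining} $E(S_1^m)$ as the number of entangled states \emph{sufficient} for extraction (versus the number \emph{necessary} for $S_2^m$), whereas you supply a necessity argument via the refinement/monotonicity lemma (any partition with a singleton block refines a $2\otimes 2^{m-1}$ cut, hence stays indistinguishable) together with the counting $m-p$ teleportations for $p$ blocks and $p\le\tfrac{m}{2}$. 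That addition is sound, and your own caveat is the right one: like the paper, it is only rigorous within the teleportation-based model, since neither argument rules out a more general entanglement-assisted LOCC protocol consuming fewer Bell pairs.
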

\begin{proof}
As can be seen from Proposition \ref{prop_m} the states in $S_1^m$ are locally indistinguishable across every $1$ vs $(m-1)$ bipartition but locally distinguishable in every $4\otimes4\otimes \cdots \otimes 4$ $m/2$-partition. Hence, if the two parties in the individual subgroups of $m/2$ partitions share a two-qubit maximally entangled state then it is possible to teleport one qubit to the other location. This will enable the set $S_1^m$ to perform the task using only $m/2$ number of bipartite entangled states.

Now, the difference in resource requirement $\Delta E$ can be defined as the following: the number of bipartite entangled states, necessary for complete information extraction from $S^m_2$, i.e., $E(S^m_2)$ difference the number of bipartite entangled states, sufficient for complete information extraction from $S^m_1$, i.e., $E(S^m_1)$.
\begin{eqnarray}
\Delta E &=& E(S^m_2)-E(S^m_1) \nonumber\\
&&=(m-1)-\frac{m}{2} \nonumber\\
&&=\frac{m-2}{2}.
\label{diff}
\end{eqnarray}
Note that for $m=4$, $\Delta E=1$; for $m=6$, $\Delta E=2$ as obtained in the previous individual cases. From Eq. (\ref{diff}) it is clear that as the number of qubits $m$ grows ($m$ is even and $m\geq4$), the task can be made more resource efficiently by using the set of states $S_1^m$.
\end{proof}

\textbf{Remark.--}For odd number of parties, i.e., when $m$ is odd, similar results follow starting from $m=5$ while the set, which is considered, is locally indistinguishable across every $1~vs~(m-1)$ bipartition but locally distinguishable across some $(m-1)/2$- partition. In this case also the difference in resource requirement for complete extraction of information grows with the number of qubits compared to a set of states which is locally distinguishable across every bipartition. 

Therefore, the type of sets, we are talking about, exist in all qubit dimensions when $m\geq4$.


\section{Conclusion}\label{sec3}
In quantum information, when the system in consideration is a composite one, advantages obtained in different tasks are mainly governed by the presence of non-local correlations in the system. Among these correlations quantum entanglement is mostly responsible in speed ups of quantum domain than its classical counterpart \cite{LP_01, V_13, JL_03} but in reality it is an expensive resource. Hence, it is always useful to reduce the use of the same without hindering the effectiveness of the main protocol \cite{V_13, NKGFS_22}. 

In this paper, we have presented an efficient protocol for information sharing such that the information remains locked to a certain extent. Particularly, we have constructed a set of states that are locally indistinguishable across some bipartitions. At the same time, the sets are locally distinguishable across remaining bipartitions. Thus, to decode the information encoded against the states of a present set, we need less number of bipartite entangled states. Hence the set of states prescribed are more useful than the states that are locally indistinguishable across every bipartition when the present task is considered. Because in the later case all the parties need to collaborate together to reveal the information systematically. 

Our construction also depicts the fact that even if $(m-1)$ parties collaborate, the information is not extracted completely -- this does not mean that to extract the information completely all parties have to collaborate. Interestingly, we have noted that the difference in the number of shared bipartite entangled states required in these two types of sets of states, to reveal the information completely, increases linearly with the increasing number of parties. The present task and corresponding examples also exhibit the instances where more local indistinguishability cannot guarantee more efficiency. 

For further research one may consider the following problems: (a) applications of local indistinguishability in information distribution protocols – here one may consider to implement the present task using mixed states or assuming probabilistic setting, (b) understanding the instances like -- more resource may not guarantee more effectiveness, (c) exploring entanglement as resource in complete information extraction, etc.

\section*{Acknowledgements} 
We thank Alexander Streltsov for helpful discussions. This work was supported by the ``Quantum Optical Technologies'' project, carried out within the International Research Agendas programme of the Foundation for Polish Science co-financed by the European Union under the European Regional Development Fund and the ``Quantum Coherence and Entanglement for Quantum Technology'' project, carried out within the First Team programme of the Foundation for Polish Science co-financed by the European Union under the European Regional Development Fund.

\bibliography{ref}
\end{document}